\DeclareMathAlphabet{\mathrsfs}{U}{rsfs}{m}{n}
\DeclareMathAlphabet{\mathpzc}{OT1}{pzc}{m}{it}
\DeclareMathAlphabet{\matheus}{U}{eus}{m}{n}
\DeclareMathAlphabet{\mathbbold}{U}{bbold}{m}{n}
\newcommand{\ket}[1]{\left | #1 \right \rangle}
\newcommand{\bra}[1]{\left \langle #1   \right |}
\newcommand{\proj}[1]{\ket{#1}\bra{#1}}
\newcommand{\HH}{\mathcal{H}}
\newcommand{\Q}{\mathcal{Q}}
\newcommand{\de}[1]{\left ( #1 \right )}
\newcommand{\DE}[1]{\left \{ #1 \right \}}
\newcommand{\tr}[1]{\textrm{Tr}\de{#1}}
\newcommand{\ba}{\begin{eqnarray}}
\newcommand{\ea}{\end{eqnarray}}
\newcommand{\be}{\begin{equation}}
\newcommand{\ee}{\end{equation}}
\newcommand{\eg}{{\it{e.g.}}}
\newcommand{\ie}{{\it{i.e.}}}
\newcommand{\etal}{{\it{et al.}}}
\newtheorem{theorem}{Theorem}
\newtheorem{lemma}{Lemma}
\begin{document}

\title{Device-independent bounds for Hardy's experiment}

\author{Rafael Rabelo}
\affiliation{Centre for Quantum Technologies, National University of Singapore, 3 Science Drive 2, Singapore 117543}
\author{Law Yun Zhi}
\affiliation{Department of Physics, National University of Singapore, 2 Science Drive 3, Singapore 117542}
\author{Valerio Scarani}
\affiliation{Centre for Quantum Technologies, National University of Singapore, 3 Science Drive 2, Singapore 117543}
\affiliation{Department of Physics, National University of Singapore, 2 Science Drive 3, Singapore 117542}

\date{\today}


\begin{abstract}
In this Letter we compute an analogue of Tsirelson's bound for Hardy's test of nonlocality, that is, the maximum violation of locality constraints allowed by the quantum formalism, irrespective of the dimension of the system. The value is found to be the same as the one achievable already with two-qubit systems, and we show that only a very specific class of states can lead to such maximal value, thus highlighting Hardy's test as a device-independent self-test protocol for such states. By considering realistic constraints in Hardy's test, we also compute device-independent upper bounds on this violation and show that these bounds are saturated by two-qubit systems, thus showing that there is no advantage in using higher-dimensional systems in experimental implementations of such test.
\end{abstract}


\maketitle

\textit{Introduction.--} The development of quantum information science is based on a recurrent pattern: non-classical features of quantum physics, previously considered as mind-boggling and worth only of philosophical chat, are found to have an operational meaning and even to be potentially useful for applications. One of the discoveries that triggered this development is the prediction and observation of the violation of Bell inequalities \cite{Bell1964}. This observation implies that correlations obtained by measuring separated quantum systems locally cannot be simulated classically without communication, a fact that is often referred to as \textit{nonlocality}.

Within quantum information, nonlocality has undergone an interesting parable. For many years, it has been put aside as having fulfilled its role: the loathed local variables models having been disposed of forever, one could peacefully concentrate on entanglement theory. Only few researchers kept on believing that this very intriguing observation could be useful for something in itself. The latter view was vindicated a few years ago, when it was noticed that nonlocality allows \textit{device-independent} assessments: indeed, nonlocality is assessed only from the input-output statistics of the measurement, without reference to the degree of freedom that is being measured. This powerful type of assessment is sensitive to the existence of undesired side-channels and will be ideal for certification of future quantum devices. So far, device-independent results are available for the security of quantum cryptography \cite{Acin2007,Hanggi2010}, the quality of sources \cite{Bardyn2009,Mckague2012} and measurement devices \cite{Rabelo2011}, the amount of randomness that one can generate \cite{Pironio2010,Colbeck2011}. In this paper, we study the possibility of device-independent assessment of one of the earliest proposals to check nonlocality: it used to be called \emph{Hardy's paradox} but, in the spirit of quantum information, we'd rather call it \textit{Hardy's test} \cite{Hardy1992}


Hardy's test was originally stated by means of a particular experimental setup consisting of two overlapping Mach-Zehnder interferometers, one for electrons and one for positrons, arranged so that if the positron and the electron each take a particular path they will meet and annihilate one another. A paradox arises under the assumption of local realism: in any classical local theory a certain detection pattern must never occur, while quantum theory assign to its occurrence a nonzero probability, hereafter referred to as \textit{Hardy's probability}. It was soon realized that the argument could be extended to different states and measurements \cite{Hardy1993, Goldstein1994}, and proved to hold for almost all entangled pure states of two qubits, with maximum Hardy's probability equal to \ba
p_{\textrm{Hardy}}&=&\de{5\sqrt{5}-11}/2\approx 9\%\,.\label{phardy}\ea Interestingly, though, the maximaly entangled state of two qubits does not show nonlocality in Hardy's test. 

Hardy's test has been the object of several theoretical generalizations \cite{Clifton1992, Kunkri2005, Seshadreesan2011,Hillery2001,Pagonis1992, Kar1997a, Cereceda2004, Boschi1997, Liang2011} and has been implemented in experiments using photonic systems \cite{Torgerson1995, Giuseppe1997, Irvine2005, Fedrizzi2011, Valone2011}. The latter, however, had to consider deviations from the original proposal, where the probabilities of a set of observations - hereafter referred to as \textit{constraint probabilities} - were assumed to be strictly equal to zero, an obviously unrealistic requirement. One way to overcome this problem is to consider a nonideal version of Hardy's test, and to compute local bounds on Hardy's probability in terms of relaxed bounds on the constraint probabilities. The computed local bound, which a successful experiment must violate, turns out \cite{Braun2008, Mermin1994, Garuccio1995} to be equivalent to the Clauser-Horne (CH) Bell inequality \cite{CH1974}.

In this paper, we provide three device-independent results on Hardy's test. First, we consider the original (or \textit{ideal}) Hardy's test and prove that that \eqref{phardy} is the maximum value of $p_{\textrm{Hardy}}$ allowed by quantum physics, irrespective of the dimension; this is the analog of the Tsirelson bound \cite{tsi80}. A remarkable consequence of our derivation constitutes our second main result: any state that achieves \eqref{phardy} in the ideal test is equivalent, up to local isometries, to the unique two-qubit state that achieves that violation. This is a case of \textit{self-testing} \cite{Mayers04,Mckague2012}, the first that detects a non-maximally entangled state (see parallel work \cite{Yang12}). Finally, our third result is a proof that, even for \textit{nonideal} versions of Hardy's test, there is no practical advantage in using higher-dimensional systems.

\begin{figure}
	\centering
		$\psfrag{a}[][][1]{$a$}
		$\psfrag{b}[][][1]{$b$}
		$\psfrag{x}[][][1]{$x$}
		$\psfrag{y}[][][1]{$y$}
		\includegraphics[width = 0.35\textwidth]{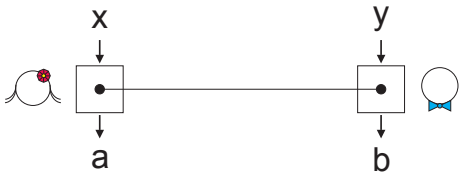}
	\caption{Schematic diagram for the Hardy's test scenario.}
	\label{fig:1}
\end{figure}

\textit{Hardy's test.--} Let us briefly summarize Hardy's test. Consider two parties, say, Alice and Bob, each of which is able to perform two possible measurements, $x = \DE{A_{0}, A_{1}}$ and $y = \DE{B_{0}, B_{1}}$, respectively, on its part of a shared physical system. Each measurement has two mutually exclusive outcomes, labeled by $a = \DE{\pm 1}$, for the measurements of Alice, and $b = \DE{\pm 1}$, for the ones of Bob. The situation considered by Hardy assumes the three constraint probabilities
\begin{subequations} \label{constraints}
\begin{align} 
p(+,+|A_0,B_0) & = 0, \\
p(+,-|A_1,B_0) & = 0, \\
p(-,+|A_0,B_1) & = 0.
\end{align}
\end{subequations}
Suppose there are measurement devices and physical systems such that these three equations are fulfilled. If this setup can be described by a local realistic theory, then it follows that
\begin{align} \label{hardy}
p_{\textrm{Hardy}} & \equiv p(+,+|A_1,B_1) = 0.
\end{align}
Hardy realized that in quantum mechanics there are measurements and a particular state of a two-qubit system such that the constraint probabilities are fulfilled while Hardy's probability is nonzero, leading to a so-called `paradox'. Extending the analisys to all possible measurements and states, Hardy later showed that the maximum value of $p_{\textrm{Hardy}}$ for systems of two qubits is $\de{5\sqrt{5}-11}/2$. A brute force calculation proved that this value cannot be exceeded using two three-dimensional systems \cite{Seshadreesan2011}. Here, we prove that this value is device-independent, that is, it is optimal for bipartite quantum systems of any dimension.

\begin{theorem}
The maximum value of Hardy's probability for quantum systems of arbitrary finite dimension is $p_{\textrm{Hardy}} = \de{5\sqrt{5}-11}/2$, just as for qubits.\end{theorem}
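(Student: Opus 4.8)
The plan is to reduce the problem in arbitrary finite dimension to the two-qubit case via Jordan's lemma --- the same mechanism that yields Tsirelson's bound for the simplest Bell scenario, with two inputs and two dichotomic outcomes per party --- and then to invoke Hardy's known qubit value.

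\emph{Reduction to pure states and projective measurements.} Write $P_x^\pm,Q_y^\pm$ for the projectors associated with outcome $\pm1$ of the measurements $A_x,B_y$; a general two-outcome POVM is brought to this form by Naimark's dilation applied separately on each side, which merely enlarges the local Hilbert spaces, keeping them finite-dimensional, and tensors $\ket\psi$ with fixed ancillary states, changing none of the four probabilities. Each probability in \eqref{constraints}--\eqref{hardy} is linear in the shared state $\rho$, and the three constraint probabilities are non-negative, so the constraints force the support of $\rho$ into the common kernel of $P_0^+\otimes Q_0^+$, $P_1^+\otimes Q_0^-$ and $P_0^-\otimes Q_1^+$. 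Hence every pure state in the range of $\rho$ already satisfies \eqref{constraints}; writing $\rho$ as a mixture of its eigenvectors, $p_{\textrm{Hardy}}(\rho)$ is the corresponding convex combination of their Hardy probabilities, so the maximum is attained on a pure $\ket\psi\in\HH_A\otimes\HH_B$.

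\emph{Jordan block decomposition.} I would then apply Jordan's lemma to the pair $\DE{A_0,A_1}$ on Alice's side and to $\DE{B_0,B_1}$ on Bob's side, obtaining orthogonal decompositions $\HH_A=\bigoplus_i\HH_A^{(i)}$ and $\HH_B=\bigoplus_j\HH_B^{(j)}$ with all blocks of dimension at most $2$, relative to which all of $P_x^\pm$ (resp. all of $Q_y^\pm$) are block-diagonal. Let $\Pi_A^{(i)},\Pi_B^{(j)}$ be the block projectors and set $\ket{\psi_{ij}}=\de{\Pi_A^{(i)}\otimes\Pi_B^{(j)}}\ket\psi$, so that $\ket\psi=\sum_{ij}\ket{\psi_{ij}}$ with mutually orthogonal summands and $\sum_{ij}\|\psi_{ij}\|^2=1$. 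The key observation is that each $\Pi_A^{(i)}\otimes\Pi_B^{(j)}$ commutes with every $P_x^\pm\otimes\II$ and $\II\otimes Q_y^\pm$, hence with each of the four relevant product operators. Applying it to the three constraint equations $\de{P_0^+\otimes Q_0^+}\ket\psi=0$, $\de{P_1^+\otimes Q_0^-}\ket\psi=0$, $\de{P_0^-\otimes Q_1^+}\ket\psi=0$ then shows that each $\ket{\psi_{ij}}$ satisfies all three constraints on its own, while $p_{\textrm{Hardy}}=\big\|\de{P_1^+\otimes Q_1^+}\ket\psi\big\|^2=\sum_{ij}\big\|\de{P_1^+\otimes Q_1^+}\ket{\psi_{ij}}\big\|^2$ because $P_1^+\otimes Q_1^+$ preserves every block.

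\emph{Conclusion.} Thus $p_{\textrm{Hardy}}=\sum_{ij}\|\psi_{ij}\|^2\,h_{ij}$ is a convex combination, over the nonzero blocks, of the Hardy probabilities $h_{ij}$ produced by $\ket{\psi_{ij}}/\|\psi_{ij}\|$ with the restrictions of the four measurements to $\HH_A^{(i)}$ and $\HH_B^{(j)}$; since each such block is a system of at most two qubits, Hardy's two-qubit analysis gives $h_{ij}\le\de{5\sqrt5-11}/2$, and hence $p_{\textrm{Hardy}}\le\de{5\sqrt5-11}/2$, with equality attained by qubits. The one point that I expect to require care is the bookkeeping that certifies each nonzero block as a genuine instance of Hardy's test on two qubits --- namely that $\DE{\Pi_A^{(i)}P_x^+,\Pi_A^{(i)}P_x^-}$ and $\DE{\Pi_B^{(j)}Q_y^+,\Pi_B^{(j)}Q_y^-}$ are honest two-outcome projective measurements and that the block's constraint probabilities vanish --- after which the bound is immediate from linearity together with the already-established qubit value; the same block structure is also what will let one promote this bound to the self-testing statement announced in the introduction.
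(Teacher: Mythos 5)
Your proposal is correct and follows essentially the same route as the paper: after reducing to projective measurements via Naimark/Neumark, both arguments apply Jordan's lemma (the paper cites it as a lemma from Masanes) to block-diagonalize each party's pair of observables into $2\times 2$ blocks, observe that the vanishing of the non-negative constraint probabilities passes to every block of nonzero weight, and bound $p_{\textrm{Hardy}}$ as a convex combination of two-qubit Hardy probabilities, each at most $\de{5\sqrt{5}-11}/2$. Your extra preliminary reduction to a pure state is harmless but unnecessary, since the block decomposition applies just as well to the mixed state directly, which is how the paper proceeds.
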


\begin{proof} In quantum mechanics, joint probabilities for the outcomes of measurements performed on space-like separated parts of a quantum system are given by
\begin{equation}
\label{born}
p(a,b|x,y) = \tr{\rho \Pi_{a|x}\otimes\Pi_{b|y}},
\end{equation}
where $\rho$ is the state of the system and $\Pi_{a|x}, \Pi_{b|y}$ are the measurement operators associated to outcomes $a, b$ of measurements $x, y$, respectively. The latter operators are POVM effects, in general; however, since we do not set any constraint on the dimension of the Hilbert space, Neumark's theorem allows us to consider only projective measurements, without loss of generality. The core of the proof exploits the following lemma, proven in \cite{Masanes2006}:

\begin{lemma}Given \textit{two} Hermitian operators $A_0$ and $A_1$ with eigenvalues $\pm 1$ acting on a Hilbert space $\HH$, there is a decomposition of $\HH$ as a direct sum of subspaces $\HH^{i}$ of dimension $d \le 2$ each, such that both $A_0$ and $A_1$ act within each $\HH^{i}$, that is, they can be written as $A_0 = \bigoplus_{i} A_0^{i}$ and $A_1 = \bigoplus_{i} A_1^{i}$, where $A_0^{i}$ and $A_1^{i}$ act on $\HH^{i}$.\end{lemma}

Let then $A_0 = \Pi_{+|A_0} - \Pi_{-|A_0}$ and $A_1 = \Pi_{+|A_1} - \Pi_{-|A_1}$, where $\Pi_{a|x}$ are projection operators. It follows from Lemma 1 that $\Pi_{a|x} = \bigoplus_{i} \Pi_{a|x}^{i}$, where each $\Pi_{a|x}^{i}$ acts on $\HH^{i}$, for all $a$ and $x$; we also denote $\Pi^i=\Pi_{+1|x}^{i}+\Pi_{-1|x}^{i}$ the projector on $\HH^{i}$. Needless to say, Lemma 1 is also valid on Bob's side; we use analog notations for Bob's operators. With these notations,  
\begin{subequations}
\begin{align}
p(a,b|x,y)  &= \sum_{i,j} q_{ij} \tr{\rho_{ij} \Pi_{a|x}^{i} \otimes \Pi_{b|y}^{j}}\\ &\equiv \sum_{i,j} q_{ij} p_{ij}(a,b|x,y),
\end{align}
\end{subequations}
where $q_{ij} = \tr{\rho \Pi^{i} \otimes \Pi^{j}}$ and $\rho_{ij} = \de{\Pi^{i} \otimes \Pi^{j} \rho \Pi^{i} \otimes \Pi^{j}} / q_{ij}$ is, at most, a two-qubit state. Since $q_{ij} \ge 0$ for all $i,j$ and $\sum_{i,j} q_{ij} = 1$, the constraint probabilities \eqref{constraints} are satisfied for $p$ if and only if they are satisfied for each of the $p_{ij}$. But, then,
\begin{equation}\label{hardysum}
p(+,+|A_1,B_1) = \sum_{i,j} q_{ij} p_{ij}(+,+|A_1,B_1),
\end{equation}
is a convex sum of Hardy's probabilities in each two-qubit subspace \footnote{It is important to note that for the maximum value of \eqref{hardysum} to be reached it is necessary that, for all $i,j$ such that $q_{ij} \neq 0$, the dimension of both $\HH^i$ and $\HH^j$ be equal to 2. This implies that the effective dimension $d$ of the local Hilbert space $\HH$ of the system is even, and that $A_{0}, A_{1}, B_{0}, B_{1}$ have exactly $d/2$ positive and negative eigenvalues.}. As a convex sum, it is less or equal to the largest element in the combination, whose maximum value is known to be given by \eqref{phardy}. This concludes the proof \footnote{There is an alternative, simpler proof of the above theorem that consists, basically, in noticing that any probability distribution that maximizes Hardy's probability is an extremal point of the set of quantum probability distributions. According to \cite{Masanes2006}, every extremal point, in this scenario, can be obtained from projective measurements on two-qubit systems, thus proving the stated result. The reason we opted for presenting the extensive proof is that it leads to interesting insights about the states that lead to such maximal violation, as discussed below. This proof cannot be extended to the nonideal scenario we later consider due to the fact that, in that scenario, it is not clear wether the points reaching maximal Hardy's probability are extremal or not.}. \end{proof}

\textit{Hardy's test leads to self-testing. --} It follows from the previous proof that $p(+,+|A_1,B_1)$ reaches its maximal value if and only if $p_{ij}(+,+|A_1,B_1)$ is maximal for every $ij$ such that $q_{ij} \neq 0$. The following Lemma, proved in \cite{Hardy1993, Goldstein1994}, states that only a very specific class of two-qubit states can lead to this maximal value:

\begin{lemma}\label{lemma2} Consider Hardy's test implemented in a two-qubit system, and let $A_{0} = B_{0} = \proj{0} - \proj{1}$. The probability $p_{\textrm{Hardy}}$ reaches its maximal value if, and only if, the state of the system is
\begin{equation}
\ket{\phi} = a\de{\ket{01} + \ket{10}} + e^{i\theta}\sqrt{1-2a^2}\ket{11},\label{defphi}
\end{equation}
and the other two measurements are $A_{1} = B_{1} = \proj{+} - \proj{-}$ with
$\ket{+} = \frac{1}{\sqrt{1-a^2}}\de{\sqrt{1-2a^2}\ket{0} - e^{i\theta}a\ket{1}}$, $a = \sqrt{\de{3-\sqrt{5}}/2}$ and $\theta$ is arbitrary.\end{lemma}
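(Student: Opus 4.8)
The plan is to set up Hardy's test in a two-qubit system with the fixed measurement convention $A_0 = B_0 = \proj{0} - \proj{1}$ and derive, directly from the three constraint probabilities \eqref{constraints} together with the requirement that $p_{\textrm{Hardy}}$ be maximal, the explicit form of the state and of the second measurements. First I would parametrize: write a general two-qubit pure state $\ket{\psi} = c_{00}\ket{00} + c_{01}\ket{01} + c_{10}\ket{10} + c_{11}\ket{11}$ (purity is legitimate here since mixed states are convex combinations and cannot outperform the best pure state in the combination), and write the two remaining rank-one projective measurements as $A_1 = \proj{\alpha_+}-\proj{\alpha_-}$, $B_1 = \proj{\beta_+}-\proj{\beta_-}$ with $\ket{\alpha_+}, \ket{\beta_+}$ arbitrary qubit vectors. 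The constraint $p(+,+|A_0,B_0)=0$ immediately forces $c_{00}=0$. The remaining two constraints, $p(+,-|A_1,B_0)=0$ and $p(-,+|A_0,B_1)=0$, say that $\ket{\alpha_+}\otimes\ket{1}$ and $\ket{0}\otimes\ket{\beta_+}$ are orthogonal to $\ket{\psi}$; these linear conditions pin down $\ket{\alpha_+}$ and $\ket{\beta_+}$ in terms of the remaining state coefficients (up to the degenerate cases one checks separately, e.g. $\ket{\alpha_+}=\ket{0}$, which give $p_{\textrm{Hardy}}=0$).

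Next I would substitute these expressions back to obtain $p_{\textrm{Hardy}} = p(+,+|A_1,B_1) = |\bra{\alpha_+}\otimes\bra{\beta_+}\,\psi\rangle|^2$ as an explicit function of the three free coefficients $c_{01}, c_{10}, c_{11}$ subject to normalization. By the symmetry of the constraints under exchange of the two parties, I expect the optimum to be symmetric, $c_{01}=c_{10}=a$, which reduces the problem to one real amplitude $a$ (and an overall phase $e^{i\theta}$ on $c_{11}$ that drops out of all probabilities); this recovers the ansatz \eqref{defphi}. I would then maximize the resulting one-variable expression for $p_{\textrm{Hardy}}(a)$ by elementary calculus: setting the derivative to zero yields a polynomial whose relevant root is $a^2 = (3-\sqrt5)/2$, and plugging back gives $p_{\textrm{Hardy}} = (5\sqrt5 - 11)/2$, matching \eqref{phardy}. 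Finally I would substitute this value of $a$ into the formulas for $\ket{\alpha_+}$ and $\ket{\beta_+}$ found above to read off $\ket{+} = \frac{1}{\sqrt{1-a^2}}\de{\sqrt{1-2a^2}\ket{0} - e^{i\theta}a\ket{1}}$ and $A_1 = B_1 = \proj{+}-\proj{-}$, establishing the ``only if'' direction; the ``if'' direction is then just the verification that this state and these measurements do satisfy \eqref{constraints} and attain the stated value.

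The main obstacle I anticipate is justifying that the symmetric choice $c_{01}=c_{10}$ is without loss of generality for the \emph{maximizer} — the constraints are symmetric under swapping Alice and Bob, but a priori the optimum could be attained on a swap-related pair rather than a swap-invariant point, and one must also rule out asymmetric critical points of the (now multivariate) objective. I would handle this either by a direct Lagrange-multiplier analysis of the full two- or three-parameter objective showing the unique interior maximum is symmetric, or by an exchange/averaging argument exploiting concavity along the symmetry direction. The rest is careful but routine bookkeeping: tracking the several degenerate branches where the orthogonality conditions fail to determine $\ket{\alpha_+}$ or $\ket{\beta_+}$ uniquely, and checking each such branch gives a strictly smaller (indeed zero) Hardy probability, so that the claimed state and measurements are the \emph{unique} (up to the free phase $\theta$) optimizers.
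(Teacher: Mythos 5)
The paper does not actually prove this lemma---it is imported verbatim from Hardy (1993) and Goldstein (1994)---so the comparison is with the standard derivation in those references, which is exactly the route you take: use the first constraint to kill $c_{00}$, let the other two constraints determine the second measurement directions in terms of the state coefficients, and then maximize a rational function. The strategy is sound, but there is one concrete error you must fix before the computation will work. The constraint $p(-,+|A_0,B_1)=0$ says that $\ket{1}\otimes\ket{\beta_+}$ is orthogonal to $\ket{\psi}$, not $\ket{0}\otimes\ket{\beta_+}$: the outcome $-$ of $A_0=\proj{0}-\proj{1}$ is the projector $\proj{1}$. Taken literally, your condition would force $\ket{\beta_+}\propto\ket{0}$ (since $\bra{0}_A\psi\rangle\propto\ket{1}_B$ once $c_{00}=0$) and the optimization would collapse to $p_{\textrm{Hardy}}=0$. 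With the corrected condition one finds $\ket{\alpha_+}\propto c_{11}^{*}\ket{0}-c_{01}^{*}\ket{1}$ and $\ket{\beta_+}\propto c_{11}^{*}\ket{0}-c_{10}^{*}\ket{1}$, which is precisely the $\ket{+}$ of the lemma when the optimal coefficients are inserted.

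The symmetry issue you flag as the main obstacle dissolves once the objective is written explicitly. Setting $u=|c_{01}|^2$, $v=|c_{10}|^2$, $w=|c_{11}|^2$ with $u+v+w=1$, the substitution gives $p_{\textrm{Hardy}}=\frac{uvw}{(u+w)(v+w)}$, manifestly independent of all phases (consistent with $\theta$ arbitrary; the other two phases are removable by local diagonal unitaries preserving $A_0$ and $B_0$). For fixed $w$, hence fixed $s=u+v$, this equals $\frac{w\,p}{(1-s)+p}$ with $p=uv$, which is strictly increasing in $p$; so the slice maximum is attained exactly and uniquely at $u=v$. No Lagrange multipliers or averaging arguments are needed. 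The remaining one-variable problem, maximizing $t^2(1-2t)/(1-t)^2$ over $t=a^2\in[0,\tfrac12]$, has the unique interior critical point $t=(3-\sqrt5)/2$ with value $(5\sqrt5-11)/2$. Combined with your checks that every degenerate branch (any of $u,v,w=0$, or a trivial choice of $A_1$ or $B_1$) gives $p_{\textrm{Hardy}}=0$, and the observation that a mixed state attaining the maximum must have every pure component attain it, this delivers both directions of the lemma, including uniqueness up to the residual phase $\theta$.
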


In view of this, one can conjecture that, if the maximal value of $p_{\textrm{Hardy}}$ is observed, the state must somehow be a direct sum of copies of $\ket{\phi}$. We proceed to prove that this is indeed the case:

\begin{theorem} If $p_{\textrm{Hardy}} = (5\sqrt{5}-11)/2$ is observed in an ideal Hardy's test [\ie, together with \eqref{constraints}], then the state of the system is equivalent up to local isometries to $\ket{\sigma}_{AB} \otimes \ket{\phi}_{A'B'}$, where $\ket{\phi}$ is given in \eqref{defphi}
and $\ket{\sigma}$ is an arbitrary bipartite state. In other words, the ideal Hardy's test constitutes a self-testing of $\ket{\phi}$.\end{theorem}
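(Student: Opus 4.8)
The plan is to build the self-testing isometry out of the block structure already extracted in the proof of Theorem~1. Recall that once $p_{\textrm{Hardy}}$ hits its maximum, equation~\eqref{hardysum} forces $p_{ij}(+,+|A_1,B_1)$ to be maximal for every $ij$ with $q_{ij}\neq 0$; by Lemma~\ref{lemma2} each such block is, up to a local change of basis inside $\HH^i\otimes\HH^j$, exactly the two-qubit configuration $\ket{\phi}$ with the canonical measurements $A_0^i=B_0^j=\proj{0}-\proj{1}$, $A_1^i=B_1^j=\proj{+}-\proj{-}$. So the first step is to record this: on each surviving block, the reduced state $\rho_{ij}$ is pure and equals $\ket{\phi}\bra{\phi}$ after a unitary $U_A^i\otimes U_B^j$, and $A_0$ restricted to $\HH^i$ has one $+1$ and one $-1$ eigenvector (the footnote in Theorem~1's proof already forces $\dim\HH^i=\dim\HH^j=2$ for all contributing blocks).

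Second, I would promote these block unitaries to a global isometry in the standard swap-gate fashion. The natural move is to use the $\pm1$ eigenprojectors of $A_0$ (respectively $B_0$) as the ``reference measurement'': within $\HH=\bigoplus_i\HH^i$ define $P_A^+=\bigoplus_i \Pi^i_{+1|A_0}$ and $P_A^-=\bigoplus_i \Pi^i_{-1|A_0}$, which split $\HH$ into two subspaces $\HH^+,\HH^-$ of equal dimension $d/2$, with a canonical identification between them induced by, say, the $A_1$-rotation on each block. Introduce ancillas $\ket{0}_{A'}\ket{0}_{B'}$ and let $\Phi=\Phi_A\otimes\Phi_B$ be the isometry that, controlled on whether the system vector lies in $\HH^+$ or $\HH^-$, writes the corresponding bit into the ancilla and then partially ``swaps.'' Because $A_0,A_1,B_0,B_1$ act blockwise and each block is a genuine qubit carrying the canonical Hardy operators, this isometry behaves on each block exactly as the qubit swap gate does in the usual self-testing arguments, so $\Phi$ maps the $A'B'$ register onto $\ket{\phi}$ and leaves the residual ``which-block plus orthogonal-complement-of-$\ket\phi$'' data in the $AB$ register as some state $\ket{\sigma}_{AB}$. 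One must check that the global state $\rho$ need not be pure — but writing $\rho=\sum_k \lambda_k \ket{\psi_k}\bra{\psi_k}$, the constraints and the maximality pass to each $\ket{\psi_k}$ (same convexity argument), each $\ket{\psi_k}$ gets sent to $\ket{\sigma_k}_{AB}\otimes\ket{\phi}_{A'B'}$, and linearity of $\Phi$ gives $\rho\mapsto \sigma_{AB}\otimes\proj{\phi}_{A'B'}$ with $\sigma=\sum_k\lambda_k\proj{\sigma_k}$; absorbing a purification into $\ket\sigma$ yields the stated product form.

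The third step is bookkeeping: verify that the coherences between different blocks $\HH^i$ do not spoil the argument. This is where I expect the main obstacle. On a single block the swap isometry works, but $\rho$ may have off-block-diagonal terms $\Pi^i\otimes\Pi^{j}\,\rho\,\Pi^{i'}\otimes\Pi^{j'}$ with $(i,j)\neq(i',j')$; these do not affect the probabilities $p(a,b|x,y)$ (which only see the block-diagonal part, since all four observables are block-diagonal), yet they are part of the actual state and the isometry must handle them. The resolution should be that, on each block, maximality of Hardy's probability pins the \emph{pure} state $\ket\phi$ up to a \emph{global} phase only after the local unitaries $U_A^i,U_B^j$ are fixed by the measurement operators themselves (the canonical forms in Lemma~\ref{lemma2} leave no residual freedom once $A_0^i,A_1^i,B_0^j,B_1^j$ are written in the standard basis, apart from the phase $\theta$, which is common to the whole state because it is an overall phase on $\ket{11}$ that can be gauged away). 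Hence after applying $\bigoplus_{i}U_A^i\otimes\bigoplus_j U_B^j$ the state is forced into $\sum_{i,j}\sqrt{q_{ij}}\,e^{i\gamma_{ij}}\,\ket{\chi}_{AB}^{(ij)}\otimes\ket{\phi}$ for a block label $\ket{\chi}^{(ij)}$, and the remaining phases $\gamma_{ij}$ and amplitudes get swept into $\ket\sigma$. I would carry out this step carefully, but the technology is exactly the block-diagonal swap-isometry argument of \cite{Mckague2012,Mayers04}, applied block by block and then recombined by linearity, so no genuinely new idea is needed beyond the decomposition of Theorem~1 and Lemma~\ref{lemma2}.
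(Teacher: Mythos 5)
Your proposal follows essentially the same route as the paper: it reuses the block decomposition from Theorem 1, applies Lemma \ref{lemma2} in each contributing block to pin the state to a translated copy of $\ket{\phi}$ with a single, measurement-determined phase $\theta$, and then extracts $\ket{\phi}$ via a local isometry that copies the $A_0$/$B_0$ eigenvalue bit into the ancilla --- precisely the paper's map $\ket{2k,0}\mapsto\ket{2k,0}$, $\ket{2k+1,0}\mapsto\ket{2k,1}$. Your extra bookkeeping (mixed states handled by convexity, inter-block coherences and phases swept into $\ket{\sigma}$) is correct and merely makes explicit steps the paper leaves implicit.
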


\begin{proof}
Without loss of generality, let us choose the eigenbases of $A_{0}$ and $B_{0}$ as the computational bases: $\Pi_{+|A_{0}}^{i} = \proj{2i}$, $\Pi_{-|A_{0}}^{i} = \proj{2i+1}$, $\Pi_{+|B_{0}}^{j} = \proj{2j}$, $\Pi_{-|B_{0}}^{j} = \proj{2j+1}$. Then, by Lemma \ref{lemma2}, $p_{ij}(+,+|A_1,B_1) = \tr{\rho_{ij} \Pi_{+|A_{1}}^{i} \otimes \Pi_{+|B_{1}}^{j}} = \de{5\sqrt{5}-11}/2$ if and only if $\rho_{ij} = \proj{\phi_{ij}}$, where
\begin{multline}
\ket{\phi_{ij}} = a\de{\ket{2i,2j+1} + \ket{2i+1,2j}} +\\ e^{i\theta}\sqrt{1-2a^2}\ket{2i+1,2j+1},
\end{multline}
and $a = \sqrt{\de{3-\sqrt{5}}/2}$ and arbitrary $\theta$. This way, a state $\ket{\psi}$ can lead to a maximal value of $p_{\textrm{Hardy}}$ if, and only if, it is given by
\begin{equation}
\ket{\psi} = \bigoplus_{i,j}\sqrt{q_{ij}} \ket{\phi_{ij}}.
\end{equation}
The coefficients $q_{ij}$ are arbitrary probabilities that, by definition, are constrained to the form $q_{ij} = r_{i} s_{j}$, where $r_{i}, s_{j} \geq 0$, $\sum_{i} r_{i} = \sum_{j} s_{j} = 1$. The angle $\theta$ cannot depend on the indices $i,j$ because $\Pi_{+|A_{1}}^{i}$ is uniquely defined by $\theta$ (cf. Lemma \ref{lemma2}), and, by definition, is independent of $j$; the same reasoning can be applied to $\Pi_{+|B_{1}}^{j}$. Now, following \cite{Mckague2012}, we append local ancilla qubits prepared in the state $\ket{00}_{A'B'}$ and look for local isometries $\Phi_{A}$ and $\Phi_{B}$ such that
\begin{equation}\label{isometry}
\de{\Phi_{A} \otimes \Phi_{B}} \ket{\psi}_{AB}\ket{00}_{A'B'} = \ket{\sigma}_{AB}\ket{\phi}_{A'B'},
\end{equation}
where $\ket{\sigma}$ is a bipartite `junk' state. This can indeed be achieved for $\Phi_{A} = \Phi_{B} = \Phi$ defined by the map
\begin{subequations}
\begin{align}
\Phi \ket{2k,0}_{CC'} & \mapsto \ket{2k,0}_{CC'}, \\ \Phi \ket{2k+1,0}_{CC'} & \mapsto \ket{2k,1}_{CC'}, 
\end{align}
\end{subequations}
for both $C=A,B$. \end{proof}

Up to now, self-testing was known only for maximally entangled states (see, \eg, \cite{Mckague2012} and references therein). A parallel, independent work by Yang and Navascu\'es provides a very general approach to the self-testing of bipartite non-maximally entangled states \cite{Yang12}. Remarkably, though, our Hardy point is not detected by that test \footnote{At least not up to the ${\cal Q}_2$ step of the hierarchy (T.H. Yang, private communication, 17 July 2012)}.

\textit{Hardy's experiment with realistic constraints.--} Suppose now that the constraint probabilities \eqref{constraints} in Hardy's experiment are not exactly equal to zero. In this case, the local bound on Hardy's probability is no longer zero, either: in general, it is given by the following inequality \cite{Mermin1994, Garuccio1995}:
\begin{multline}
p(+,+|A_{1},B_{1}) \leq p(+,+|A_{0},B_{0}) + \\ p(+,-|A_{1},B_{0}) + p(-,+|A_{0},B_{1}).
\end{multline}
This inequality is a re-writing of the CH inequality \cite{CH1974}, which is not surprising, since the CH inequality is the only relevant criterion for nonlocality in a scenario with two parties, two inputs and two outcomes. In other words, as noticed in \cite{Braun2008}, Hardy's experiment turns out to be a study of the violation of the CH inequality under further constraints about the values of some probabilities.

Let us now set
\begin{subequations}
\label{constraints2}
\begin{align} 
p(+,+|A_0,B_0) & \leq \epsilon, \\
p(+,-|A_1,B_0) & \leq \epsilon, \\
p(-,+|A_0,B_1) & \leq \epsilon,
\end{align}
\end{subequations} for some $\epsilon\geq 0$ \footnote{Notice that, if no-signaling holds, then $p(+,+|A_0,B_0)=\epsilon$ implies $p(+|A_0)\geq \epsilon$ and $p(-,+|A_0,B_1)=\epsilon$ implies $p(-|A_0)\geq \epsilon$. Therefore $\epsilon\leq \frac{1}{2}$. But the region of interest is in fact $\epsilon\leq \frac{1}{3}$ as explained just below.}. The local bound on Hardy's probability becomes
\begin{equation} \label{hardy2}
p(+,+|A_1,B_1) \leq 3\epsilon.
\end{equation}
For $\epsilon\geq \frac{1}{3}$, the bound is trivial and quantum physics certainly cannot violate it; while for $0\leq \epsilon< \frac{1}{3}$, quantum physics may lead to a violation of the local bound. As before, we want to assess the maximal quantum violation in a device-independent scenario, \ie, without making any assumption on the Hilbert space dimension. The previously stated theorem cannot be extended, so we take a different approach: first, we use semi-definite programs to obtain an upper bound on Hardy's probability, using the method of Navascu\'es, Pironio and Ac\'{\i}n \cite{Navascues2008}; second, by considering two-qubit systems we obtain a value that is certainly achievable with quantum systems. By noticing that the values thus obtained coincide, we conclude that we have obtained the optimal value for Hardy's probability, and that this value can be reached with two qubit systems.

In detail: let $\Q$ be the set of quantum joint probability distributions, that is, vectors of probabilities of the form \eqref{born}, for all $a,b,x,y$. We compute a device-independent upper bound on Hardy's probability by optimizing it not over quantum probabilities in the set $\Q$ but over a larger set of probabilities that is computationally tractable --- as opposed to $\Q$, that still lacks a better characterization. This set is one of an infinite hierarchy of sets $\Q_{1} \supset \Q_{2} \supset \dots \supset \Q_{n} \supset \dots$, defined in terms of semi-definite programs \cite{Navascues2008, Doherty2008}, proven to converge to the quantum set, $\textrm{lim}_{n \rightarrow \infty} \Q_{n} = \Q$. For several values of $\epsilon$ in the interval $0 \leq \epsilon\leq 1/3$, we optimize Hardy's probability over the set $Q_{3}$, enforcing the constraints \eqref{constraints2}. The implementation was done in MATLAB using semi-definite programming \cite{Sedumi, Yalmip}. The results form the solid line in Fig.~\ref{fig:2}. For the lower bound, we consider the most general mixed states of two qubits and POVM elements acting on those. The maximal value of the Hardy's probability is estimated using constrained nonlinear optimization methods in MATLAB. These methods are not guaranteed to converge to global maxima, though, and are in fact rather sensitive to seed conditions; each point on the dotted line in Fig.~\ref{fig:2} is the maximum obtained over $10^{4}$ runs, with random initial seeds.

\begin{figure}
	\centering
		\includegraphics[width = 0.5\textwidth]{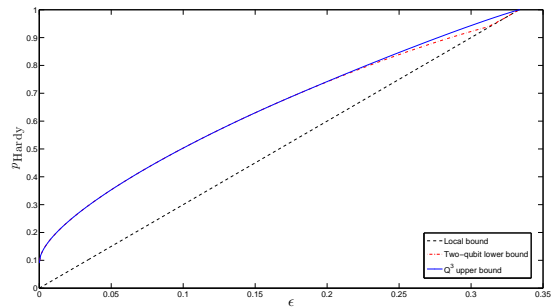}
	\caption{Upper and lower bounds on maximum Hardy's probability $p_{\textrm{Hardy}}$ in terms of the bound $\epsilon$ on the constraint probabilities. The solid (blue) line is the upper bound, computed from the set $\Q_{3}$; the dotted (red) line is the lower bound, computed from two-qubit systems; the dashed (black) line is the local bound.}
	\label{fig:2}
\end{figure}

The computed lower and upper bounds for Hardy's probability differ, at most, by values of order $10^{-2}$; in the region $\epsilon \lesssim 0.2$ (where any experiment that aims at implementing Hardy's test will have to be), this difference is of order $10^{-6}$. This proves that there is no advantage in using higher-dimensional systems, as compared to two-qubit systems, even in the presence of imperfections.

\textit{Conclusion.--} In this letter, we prove that the maximum value of Hardy's probability found for two-qubit systems, $\de{5\sqrt{5} -11}/2$, is the maximum one allowed by quantum theory, irrespective of the dimension of the system and of the measurements performed, that is, independend of the devices used. By showing that only a certain class of states can lead to such maximal value, we show that Hardy's test is, in fact, a self-testing protocol for such states. Extending the first results to a nonideal vesion of Hardy's test, where the constraint probabilities are no longer equal to zero, we compute device-independent upper bounds on Hardy's probability, in terms of the error parameter, and show that this bound is saturated by two-qubit systems.

Despite their fundamental importance, as the first proven analogue of Tsirelson's bound for Hardy's test, the results here presented also serve as a guideline for future experimental implementations, as they show that there is no advantage in using higher dimensional systems, as compared to two-qubit systems. 

\textit{Acknowledgements.--} The authors thank Tobias Fritz, Sibasish Ghosh and Marcelo Terra Cunha for discussions and comments. This work was supported by the National Research Foundation and the Ministry of Education, Singapore.


\end{document}